\newcounter{fact}
\newtheorem{theorem}[fact]{Theorem}
\newtheorem{lemma}[fact]{Lemma}
\newtheorem{definition}[fact]{Definition}
\newtheorem{example}[fact]{Example}
\newcommand{\Ev}{E}
\newcommand{\Princ}{\mathcal{A}}
\newcommand{\aname}{\mathcal{A}}
\newcommand{\cname}{\mathcal{C}}
\newcommand{\princsym}{\pi}
\newcommand{\princ}[2][]{\princsym_{#1}({#2})}
\newcommand{\oksym}{\it ok}
\newcommand{\ok}[2]{{#1} \;\oksym\, {#2}}
\newcommand{\duties}[2]{\mathit{duties}({#1},{#2})}
\newcommand{\conf}[1][]{\mathcal{F}_{{#1}}}
\newcommand{\powset}[1]{\wp(#1)}
\newcommand{\mkset}[1]{\overline{#1}}
\renewcommand{\epsilon}{\varepsilon}
\newcommand{\hidden}[1]{}
\newcommand{\coco}{\mbox{\ensuremath{\mathrm{CO}_2}\hspace{2pt}}}
\newcommand{\pmv}[1]{\ensuremath{\mathsf{#1}}}
\newcommand{\atom}[1]{\textit{#1}}
\newcommand{\coimp}{\twoheadrightarrow}
\newcommand{\irule}[2]{
  \begin{array}{c}
    #1  \\ \hline
    #2
  \end{array}}
\newcommand{\imp}{\rightarrow}
\newcommand{\ask}[2]{\mathsf{ask}_{{#1}}\,{#2}}
\newcommand{\fuse}[2]{\mathsf{fuse}_{{#1}}\,{#2}}
 \newcommand{\says}{\ensuremath{\;\mathit{says}\;}}
\newcommand{\pcl}{\textup{PCL\;}}
\newcommand{\pclminus}{\ensuremath{1\mathit{N}\text{-}\pcl}}
\newcommand{\setenum}[1]{\{#1\}}
\newcommand{\setcomp}[2]{\{{#1} \;\mid\; {#2}\}}
\newcommand{\mytitle}[0]{An event-based model for contracts}
\title{\mytitle}
\author{Massimo Bartoletti \qquad\qquad
Tiziana Cimoli
\qquad\qquad
G.~Michele Pinna
\institute{Universit\`a degli Studi di Cagliari, Italy 
\email{\setenum{bart,t.cimoli,gmpinna}@unica.it} 
}
\and
Roberto Zunino
\institute{Universit\`a degli Studi di Trento and COSBI, Italy
\email{roberto.zunino@unitn.it}
}
}
\begin{document}

\maketitle
\begin{abstract}
We introduce a basic model for contracts. 
Our model extends event structures 
with a new relation, 
which faithfully captures the circular dependencies among 
contract clauses. 
We establish whether an agreement exists which respects all the
contracts at hand
(i.e.\ all the dependencies can be resolved), and we detect
the obligations of each participant.
The main technical contribution is a correspondence between our model 
and a fragment of the contract logic PCL~\cite{BZ10lics}. 
More precisely, we show that the reachable events are exactly those which
correspond to provable atoms in the logic.
Despite of this strong correspondence, our model improves~\cite{BZ10lics}
by exhibiting a finer-grained notion of culpability, which 
takes into account the legitimate orderings of events.
\end{abstract}

\newcommand{\cem}[2]{#2}

\section{Introduction}

Contracts will play an increasingly important role in the
specification and implementation of distributed systems. 
Since participants in distributed systems may be mutually distrusted, and
may have conflicting individual goals, the possibility that
a participant behaviour may diverge from the expected one is quite
realistic. 
To protect themselves against possible misconducts,
 participants should postpone actual collaboration until reaching an
\emph{agreement} on the mutually offered behaviour. 
This requires a preliminary step, where each participant
declares her promised behaviour, i.e.\ her \emph{contract}.

A contract is a sort of
assume/guarantee rule, which makes explicit the dependency between
the actions performed by a participant, and those promised in return
by the others.
Event structures~\cite{Winskel86} can provide a basic semantic model
for assume/guarantee rules, by interpreting the enabling 
$\atom b \vdash \atom a$ as the contract clause:
``I will do {\atom a} \emph{after} you have done {\atom b}''.
However, event structures do not capture a typical aspect of contracts,
i.e.\ the capability of reaching an agreement when the assumptions
and the guarantees of the parties mutually match.
For instance, in the event structure with enablings
$\atom b \vdash \atom a$ and $\atom a \vdash \atom b$,
none of the events {\atom a} and {\atom b} is reachable,
because of the circularity of the constraints.
An agreement would still be possible if one of the parties 
is willing to accept a weaker contract. 
Of course, the contract \mbox{``I will do {\atom b}''} 
(modelled as $\vdash {\atom b}$)
will lead to an agreement with the contract ${\atom b} \vdash {\atom a}$, 
but it offers no \emph{protection} to the participant who offers it:
indeed, such contract can be stipulated without having anything in return.

In this paper we introduce a model for contracts, by extending
(conflict-free) event structures with a new relation~$\Vdash$.
The contract $a \Vdash b$ (intuitively,
``I will do {\atom a} if you \emph{promise} to do {\atom b}'')
reaches an agreement with the dual contract $b \Vdash a$,
while protecting the participant who offers it.
We formalise agreements as configurations where all the participants
have reached their goals. 
We show that the problem of deciding if an agreement exists 
can be reduced to the problem of proving a suitable formula in 
(a fragment of) the contract logic \pcl\!~\cite{BZ10lics}, 
where an effective decision procedure for provability exists. 

Once an agreement has been found, the involved participants may safely 
cooperate by performing events.
Indeed, we prove that 
--- even in the presence of dishonest participants which do 
not respect their promises --- either all the participants reach their goals, 
or some of them is \emph{culpable} of not having performed her duties. 
A culpable  participant may then be identified (and possibly punished).
Also the problem of detecting duties and identifying culpable
participants is related to provability in \pcl\!.
Notably, while \pcl does not distinguish between the immediate duties 
and those that will only be required later on in a computation
(all provable atoms are considered duties in \pcl\!),
the richer semantical structure of our model allows for 
a finer-grained notion of duties, which depend on the actual
events already performed in a contract execution.

\section{Contract model} \label{sect:contract-model}

A contract (Def.~\ref{def:contracts}) comprises a set of events $E$ and a set of participants $\aname$.
Each event $e \in E$ is uniquely associated to a participant $\princ{e} \in \aname$. 
Events are ranged over by $\atom{a}, \atom{b}, \ldots$, 
sets of events by $C,D,X,Y,\ldots$, and participants by ${\pmv A}, {\pmv B}, \ldots$.
Events are constrained by two relations: 
one is the enabling relation $\vdash$ of~\cite{Winskel86}, 
while the other is called \emph{circular} enabling relation, and it is denoted by $\Vdash$.
Intuitively, $D \vdash e$ states that $e$ may be performed \emph{after} 
all the events in $D$ have happened;
instead, $D \Vdash e$ means that $e$ may be performed either
if $D$ has already happened (similarly to~$\vdash$),
or possibly ``on credit'', 
on the promise that the events in $D$ will be performed at some later time.
The goals of each participant are indicated by the relation $\oksym$:
$\ok{\pmv A}{X}$ means that $\pmv A$ is satisfied if
\emph{all} the events in $X$ have happened.
The composition of contracts is defined component-wise, 
provided that events are uniquely associated to participants.

\begin{definition} \label{def:contracts}
A contract $\cname$ is a 6-tuple $(\Ev, \Princ, \princsym, \oksym, \vdash,\Vdash)$, 
where:
\begin{itemize}

\item $\Ev$ is a finite set of \emph{events};

\item $\Princ$ is a finite set of \emph{participants};

\item $\princsym: \Ev \rightarrow \Princ$ associates each event to a participant;

\item $\oksym \subseteq \Princ \times \powset{\Ev}$ is the 
\emph{fulfillment} relation, such that
\(
\ok{\pmv A}{X} \,\land\,  X \subseteq Y \implies \ok{\pmv A}{Y}
\);

\item $\vdash \;\subseteq \powset{\Ev} \; \times \; \Ev$ is the  
\emph{enabling relation};

\item $\Vdash \;\subseteq \powset{\Ev} \; \times \; \Ev$ is the 
\emph{circular enabling} relation.

\end{itemize}
We assume that both the enabling relations are \emph{saturated}, i.e.\
$X \circ e  \; \land \;  X \subseteq Y \implies Y \circ e$,
for $\circ \in \setenum{\vdash, \Vdash}$.
\end{definition}

The saturation of the relation $\oksym$ models the fact that
once a contract has been fulfilled
(i.e.\ a state is reached where all participants say $\oksym$), 
additional events can be neglected.

For notational convenience, 
we shall sometimes omit curly brackets around singletons, 
e.g.\ we shall write $a \vdash b$ instead of $\setenum{a} \vdash b$,
and we shall simply write $\vdash e$ for $\emptyset \vdash e$.
Similar abbreviations apply to $\Vdash$.

\begin{example} \label{ex:toys:contract}
Suppose there are three kids who want to play together.
Alice has a toy airplane, Bob has a bike, while Carl has a toy car.
Each of the kids is willing to share his toy, but they have different constraints:
Alice will lend her airplane only \emph{after} Bob has 
allowed her ride his bike;
Bob will lend his bike only after  he has played with Carl's car;
Carl will lend his toy car if the other two kids promise that
they will eventually let him play with their toys.
These constraints are modelled by the following contract~$\cname$,
where we only indicate the minimal elements of the relations $\vdash,\Vdash$ and $\oksym$:
\[
\begin{array}{lllll}
  & \Ev = \setenum{\atom{a},\atom{b},\atom{c}}
  \qquad
  & \setenum{\atom b} \vdash {\atom a}
  \qquad
  & \setenum{\atom c} \vdash {\atom b}
  \qquad
  & \setenum{\atom a, \atom b} \Vdash {\atom c} 
  \\[5pt]
  & \aname = \setenum{{\pmv A}, {\pmv B}, {\pmv C}}
  & \ok{\pmv A}{\setenum{\atom b}}
  & \ok{\pmv B}{\setenum{\atom c}}
  & \ok{\pmv C}{\setenum{\atom a,\atom b}}
  \\[5pt]
  &
  & \princ{a} = {\pmv A}
  & \princ{b} = {\pmv B}
  & \princ{c} = {\pmv C}
\end{array}
\]
\end{example}

In the previous example, it is crucial that Carl's contract allows 
the event {\atom c} to happen ``on credit'' before the other events
are performed.
We shall show that this leads to an agreement among the participants,
while no agreement exists were Carl requiring 
$\setenum{\atom a, \atom b} \vdash {\atom c}$
(cf.\ Ex.~\ref{ex:toys:conf}).

In Def.~\ref{def:conf} we refine the notion of configuration 
of~\cite{Winskel86}, so to deal with the new $\Vdash$-enablings.
A set of events $C$ is a configuration if its events can be ordered
in such a way that each event $e \in C$ 
is either $\vdash$-enabled by its predecessors, 
or it is $\Vdash$-enabled by the whole $C$.
Configurations play a crucial role,
as they represent sets of events where all the debts have been honoured.

\begin{definition} \label{def:conf}
For all contracts $\cname$,
we say that $C \subseteq \Ev$ is a \emph{configuration of $\cname$} iff  
\[
  \exists e_0, \ldots, e_n. \;\;  
  \big(
  \setenum{e_0, \ldots, e_n} = C 
  \;\land\; 
  \forall i \leq n.\;\; 
  (
  \setenum{e_0, \ldots, e_{i-1}} \vdash e_i  
  \;\lor\;  
  C \Vdash e_i 
  )
  \big)
\]
The set of all configurations of $\cname$ is denoted by $\conf[\cname]$.
\end{definition}

\begin{example} \label{ex:handshaking:conf}
Not all sets of events are also configurations.
For instance, in the contract with enablings
$\atom a \Vdash \atom b$ and $\atom b \Vdash \atom a$, 
the sets $\emptyset$ and $\setenum{\atom a, \atom b}$ are configurations 
(in the latter, the use of $\Vdash$ allows for resolving 
the circular dependency between {\atom a} and {\atom b}),
while $\setenum{\atom a}$ and $\setenum{\atom b}$ are not.
\end{example}

\begin{example} \label{ex:toys:conf}
The contract $\cname$ of Ex.~\ref{ex:toys:contract}
has configurations $\emptyset$ and $\Ev = \setenum{\atom a,\atom b,\atom c}$.
Note that if Carl replaces his contract with
$\setenum{\atom a, \atom b} \vdash {\atom c}$,
then $\Ev$ no longer belongs to $\conf[\cname]$.
\end{example}

Following the examples above we observe that, 
differently from other event-based models,
if $C$ is a configuration, not necessarily $X \subseteq C$ is a configuration
as well. 
Hereafter, subsets of $\Ev$ are called \emph{states},
regardless they are configurations or not.


Since our contracts have no conflicts (unlike~\cite{Winskel86}),
the union of two configurations is a configuration as well.

\newcommand{\lemconfplusconf}{
For all contracts $\cname$, 
if $C  \in \conf[\cname]$ and $D \in \conf[\cname]$,  
then $C \cup D \in  \conf[\cname]$.
}
\begin{lemma} \label{lem:conf-plus-conf}
\lemconfplusconf
\end{lemma}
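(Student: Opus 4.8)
The plan is to construct a witnessing ordering for $C \cup D$ by concatenating a witnessing ordering for $C$ with one for $D$, after deleting from the latter the events already listed by the former. Concretely, let $e_0, \ldots, e_n$ witness $C \in \conf[\cname]$ and $f_0, \ldots, f_m$ witness $D \in \conf[\cname]$ as in Def.~\ref{def:conf} (we may clearly assume each sequence lists its events without repetition), and let $g_0, \ldots, g_k$ be the subsequence of $f_0, \ldots, f_m$ obtained by removing every $f_j \in C$. Then $e_0, \ldots, e_n, g_0, \ldots, g_k$ enumerates $C \cup D$ without repetitions, and I would show that it satisfies the defining condition of Def.~\ref{def:conf} for $C \cup D$ (the case $D \subseteq C$, where the suffix is empty and $C \cup D = C$, being trivial).

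First I would check the prefix. For $i \le n$, the hypothesis on $C$ gives either $\setenum{e_0,\ldots,e_{i-1}} \vdash e_i$ --- in which case nothing more is needed, since this set is exactly the set of predecessors of $e_i$ in the new ordering --- or $C \Vdash e_i$, in which case $C \subseteq C \cup D$ and saturation of $\Vdash$ yield $C \cup D \Vdash e_i$.

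Then I would check the suffix. Writing $g_l = f_j$ with $f_j \notin C$, the set of predecessors of $g_l$ in the new ordering is $C \cup \setenum{g_0,\ldots,g_{l-1}}$, which contains $\setenum{f_0,\ldots,f_{j-1}}$, because every $f_h$ with $h < j$ lies either in $C$ or among $g_0,\ldots,g_{l-1}$. The hypothesis on $D$ gives either $\setenum{f_0,\ldots,f_{j-1}} \vdash f_j$, whence saturation of $\vdash$ gives $C \cup \setenum{g_0,\ldots,g_{l-1}} \vdash f_j$, or $D \Vdash f_j$, whence $D \subseteq C \cup D$ and saturation of $\Vdash$ give $C \cup D \Vdash f_j$. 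This establishes $C \cup D \in \conf[\cname]$.

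The only subtlety I anticipate is the bookkeeping around the shared events $C \cap D$: one must splice the two orderings so that no event is repeated, and then observe that enlarging the left-hand side of an enabling with further already-performed events is always sound, precisely because both $\vdash$ and $\Vdash$ are saturated. No induction on the structure of configurations is needed; it is however worth noting that this simple concatenation argument works only because the model carries no conflict relation, so nothing in the merged ordering can become inconsistent.
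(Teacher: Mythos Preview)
Your argument is correct. The paper states Lemma~\ref{lem:conf-plus-conf} without proof, so there is no authorial argument to compare against; your concatenation-plus-saturation construction is precisely the natural proof, and your closing remark that the absence of a conflict relation is what makes it work matches the paper's own framing in the sentence introducing the lemma.
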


Given a configuration $C$ and an event $e$, the set $C \cup \setenum{e}$
is still a configuration if $C \vdash e$ or $C \Vdash e$.
Otherwise, $C \cup \setenum{e}$ is not a configuration.
Compositional reasoning on sets of events (not necessarily configurations)
requires to keep track of the events taken ``on credit'',
as sketched in the proof of Th.~\ref{th:es-pcl}.

An event is \emph{reachable} when it belongs to a configuration; 
a set of events $X$ is reachable if every event in $X$ is reachable.
%
A reachable set is not necessarily a configuration 
(e.g.\ $\setenum{\atom{a},\atom{b}}$ in Ex.~\ref{ex:toys:contract});
yet, there always exists a configuration that contains it.
This follows by Lemma~\ref{lem:conf-plus-conf}, 
which guarantees that configurations are closed by union. 
The set comprising all the reachable events is a configuration
(actually, it is the greatest one).

\newcommand{\lemallinconf}{
Let $X \subseteq \Ev$ be a reachable set of events. Then,
\(
  \exists C \in \conf[\cname]. \;  X \subseteq C 
\).
}
\begin{lemma} \label{lem:all-in-conf} 
\lemallinconf
\end{lemma}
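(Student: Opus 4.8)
The plan is to exploit the closure of configurations under union (Lemma~\ref{lem:conf-plus-conf}) together with the finiteness of $\Ev$. Since $X$ is reachable, by definition every event $e \in X$ belongs to some configuration $C_e \in \conf[\cname]$. Because $\Ev$ is finite, $X$ is finite, so the natural candidate witness $C = \bigcup_{e \in X} C_e$ is a \emph{finite} union of configurations; it obviously satisfies $X \subseteq C$, so it only remains to show that $C$ is itself a configuration.

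I would establish this by induction on $|X|$. For the base case $|X| = 0$ we have $X = \emptyset$, and we may take $C = \emptyset$: the empty sequence of events witnesses $\emptyset \in \conf[\cname]$ according to Def.~\ref{def:conf}, since the condition $\forall i \leq n$ is then vacuously true. For the inductive step, write $X = X' \cup \setenum{e}$ with $e \notin X'$ and $|X'| = |X| - 1$. By the induction hypothesis there is $C' \in \conf[\cname]$ with $X' \subseteq C'$, and by reachability of $e$ there is $C_e \in \conf[\cname]$ with $e \in C_e$. Lemma~\ref{lem:conf-plus-conf} then gives $C := C' \cup C_e \in \conf[\cname]$, and clearly $X = X' \cup \setenum{e} \subseteq C' \cup C_e = C$, which closes the induction.

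There is essentially no hard step here: the only points requiring a moment's care are the base case (observing that $\emptyset$ is always a configuration, so that the statement is not vacuously problematic when $X = \emptyset$) and the appeal to finiteness of $X$, which is what legitimises the iterated union and makes the induction terminate. Everything else is a direct application of Lemma~\ref{lem:conf-plus-conf}.
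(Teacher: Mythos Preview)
Your proof is correct and follows exactly the approach the paper intends: the paper explicitly states that the lemma ``follows by Lemma~\ref{lem:conf-plus-conf}, which guarantees that configurations are closed by union,'' and your induction on $|X|$ simply makes this remark precise. Your attention to the base case and to the finiteness of $\Ev$ fills in the details the paper leaves implicit, but there is no difference in strategy.
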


\begin{lemma} \label{lem:conf:reachable-maximal}
$C = \bigcup \setcomp{e \in \Ev}{e \text{ is reachable}} \in \conf[\cname]$,
and $\forall C' \in \conf[\cname]. \;\; C' \subseteq C$.
\end{lemma}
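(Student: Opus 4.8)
The plan is to let $C \;=\; \bigcup \setcomp{e \in \Ev}{e \text{ is reachable}}$ and prove the two claims in turn. For the first claim, that $C \in \conf[\cname]$, I would argue as follows. By definition, for each reachable $e \in C$ there is some configuration $C_e \in \conf[\cname]$ with $e \in C_e$; in fact, since $e \in C_e$ implies that every event of $C_e$ is reachable, we have $C_e \subseteq C$. Now $\Ev$ is finite, so $C$ is finite, say $C = \setenum{e_1, \ldots, e_k}$, and $C = C_{e_1} \cup \cdots \cup C_{e_k}$. Applying Lemma~\ref{lem:conf-plus-conf} finitely many times (by induction on $k$, using that $\emptyset \in \conf[\cname]$ trivially as the base case and closure under binary union for the step) yields $C \in \conf[\cname]$.

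For the second claim, let $C' \in \conf[\cname]$ be arbitrary. Every $e \in C'$ lies in the configuration $C'$, hence $e$ is reachable by the definition of reachability, hence $e \in C$. Therefore $C' \subseteq C$, which is exactly maximality.

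I do not expect any real obstacle here: the lemma is essentially a corollary of Lemma~\ref{lem:conf-plus-conf} (closure under union) together with the finiteness of $\Ev$ and the definition of reachability. The one point that needs a word of care is the observation that a configuration witnessing the reachability of $e$ can be taken to be contained in $C$ — but this is immediate, since any configuration consists entirely of reachable events and $C$ collects all of them, so no separate argument is needed. A second minor point is the finiteness of $\Ev$, which is guaranteed by Def.~\ref{def:contracts} and is what lets us reduce the possibly-large union to a finite one before invoking Lemma~\ref{lem:conf-plus-conf}.
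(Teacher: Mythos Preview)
Your proposal is correct and follows exactly the approach the paper intends: the paper does not give a formal proof for this lemma, but the surrounding text explicitly says that the result ``follows by Lemma~\ref{lem:conf-plus-conf}, which guarantees that configurations are closed by union,'' and your argument spells this out in the obvious way using finiteness of $\Ev$. The maximality part is immediate from the definition of reachability, just as you wrote.
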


\subsection{Agreements}

Informally, a contract admits an \emph{agreement} when all the involved participants 
are happy with the guarantees provided by that contract.
In Def.~\ref{def:agreement}, we formalise an agreement on a contract $\cname$
as a configuration of $\cname$
where all the participants have reached their individual goals.
\text{E.g.}, the configuration $\Ev = \setenum{\atom a,\atom b,\atom c}$
is an agreement on the contract $\cname$ of Ex.~\ref{ex:toys:contract},
since $\ok{P}{\Ev}$ holds for $P \in \setenum{\pmv A,\pmv B,\pmv C}$ 
by saturation of $\oksym$.

\begin{definition} \label{def:agreement}
An \emph{agreement} on $\cname$ is a configuration 
$C \in \conf[\cname]$ such that
$\forall \pmv A \in \Princ : \ok{\pmv A}{C}$.
\end{definition}



We now establish the duties of a participant in a state where
some events $X$ have been performed.
Although several different definitions of duties are possible,
the common factor of any reasonable definition is that,
in the absence of duties, all the participants must have reached their goals
(see Th.~\ref{th:agreement}).
Here we focus on a definition of duties 
where $\vdash$ is prioritized over $\Vdash$,
i.e.\ an event may be performed on credit only if no other ways
are possible.
More precisely, an event $e$ belongs to $\duties{\pmv A}{X}$ if 
$(i)$ $e$ is not already present in $X$, but is in some configuration $C$, 
$(ii)$ $\princ{e} = {\pmv A}$, and 
$(iii)$ either $e$ is $\vdash$-enabled by $X$, or,
if no $\vdash$-enablings are possible from $X$,  
then $e$ is $\Vdash$-enabled by some events in $C \cup X$.

\begin{definition} \label{def:duties}
For all ${\pmv A}$, 
for all $X$, 
we define $\duties{\pmv A}{X}$ as the set of events 
$e \not\in X$ such that $\princ{e} = {\pmv A}$ and
there exists $C \in \conf[\cname]$ such that $e \in C$, and
either $X \vdash e$ or 
$\nexists e' \in C \setminus X : X \vdash e' \,\land\, \exists D \subseteq C \cup X : D \Vdash e$.
A participant {\pmv A} is \emph{culpable} in $X$ when 
{\pmv A} has some duties in $X$.
\end{definition}

\begin{example} \label{ex:toys:culpable}
Recall the contract $\cname$ of Ex.~\ref{ex:toys:contract}.
By Def.~\ref{def:duties},
in state $\emptyset$ only participant {\pmv C} is culpable,
with $\duties{\pmv C}{\emptyset} = \setenum{\atom c}$;
in $\setenum{\atom c}$ only {\pmv B} is culpable,
with $\duties{\pmv B}{\setenum{\atom c}} = \setenum{\atom b}$;
finally, in $\setenum{\atom b,\atom c}$ only {\pmv A} is culpable,
with $\duties{\pmv A}{\setenum{\atom b,\atom c}} = \setenum{\atom a}$.
\end{example}

\begin{example}
Let $\cname$ be a contract with 
$\setenum{\atom a_0,\atom a_1} \Vdash \atom a_2$,
$\setenum{\atom a_0,\atom a_2} \Vdash \atom a_1$,
$\setenum{\atom a_1,\atom a_2} \vdash \atom a_3$,
and $\emptyset \vdash \atom a_0$,
where $\princ{{\atom a}_i} = {\pmv A}_i$ for $i \in [0,3]$.
We have that only $\pmv A_0$ is culpable in $\emptyset$;
only $\pmv A_1$ and $\pmv A_2$ are culpable in $\setenum{\atom a_0}$;
only $\pmv A_1$ is culpable in $\setenum{\atom a_0,\atom a_2}$;
only $\pmv A_2$ is culpable in $\setenum{\atom a_0,\atom a_1}$;
only $\pmv A_3$ is culpable in $\setenum{\atom a_0,\atom a_1,\atom a_2}$;
finally, no one is culpable in $C = \setenum{\atom a_0,\atom a_1,\atom a_2, \atom a_3} \in \conf[\cname]$. 
\end{example}

The following theorem establishes that it is safe to execute contracts
after they have been agreed upon.
More precisely, in each state $X$ of the contract execution, 
either all the participant goals have been fulfilled,
or some participant is culpable in $X$.
Note that, in consequence of Def.~\ref{def:duties},
a participant can always exculpate herself by performing some of her duties.
This is because, if $D = \duties{\pmv A}{X}$ is not empty,
participant {\pmv A} is always allowed to perform all the events in $D$,
eventually reaching a state where she is not culpable
(note also that in the maximal state $\Ev$ no one is culpable). 

\begin{theorem} \label{th:agreement}
If an agreement on $\cname$ exists,
then for all participants ${\pmv A} \in \Princ$, and for all 
$X \subseteq \Ev$,
either $\ok{A}{X}$, or some participant  is culpable in $X$.
\end{theorem}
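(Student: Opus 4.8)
The plan is to exploit the agreement $C_0 \in \conf[\cname]$ whose existence is hypothesised: by Def.~\ref{def:agreement} we have $\ok{\pmv A}{C_0}$ for every participant $\pmv A$. Fixing an arbitrary state $X \subseteq \Ev$, I will prove the dichotomy ``$C_0 \subseteq X$, or some participant is culpable in $X$''; this suffices, since in the first case saturation of $\oksym$ gives $\ok{\pmv A}{X}$ for all $\pmv A$ (in particular for the participant quantified in the statement), while in the second case the required disjunction holds for every $\pmv A$ outright.

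Assume then $C_0 \setminus X \neq \emptyset$, and distinguish two cases according to whether $X$ can $\vdash$-enable some event of $C_0$ that is missing from $X$. First, if there is $e \in C_0 \setminus X$ with $X \vdash e$, then the configuration $C_0$ witnesses $e \in \duties{\princ{e}}{X}$ (indeed $e \notin X$, $e \in C_0 \in \conf[\cname]$, and $X \vdash e$), so $\princ{e}$ is culpable in $X$. Second, suppose no event of $C_0 \setminus X$ is $\vdash$-enabled by $X$. Pick an enumeration $e_0, \dots, e_n$ of $C_0$ witnessing, via Def.~\ref{def:conf}, that $C_0$ is a configuration, and let $i$ be the least index with $e_i \notin X$ (it exists since $C_0 \setminus X \neq \emptyset$); by minimality $\setenum{e_0, \dots, e_{i-1}} \subseteq X$. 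By Def.~\ref{def:conf}, either $\setenum{e_0, \dots, e_{i-1}} \vdash e_i$ or $C_0 \Vdash e_i$. The former is impossible, for by saturation of $\vdash$ it would give $X \vdash e_i$ with $e_i \in C_0 \setminus X$, against the case hypothesis. Hence $C_0 \Vdash e_i$, so with $D = C_0 \subseteq C_0 \cup X$ we have $D \Vdash e_i$; since moreover no $e' \in C_0 \setminus X$ satisfies $X \vdash e'$, Def.~\ref{def:duties} gives $e_i \in \duties{\princ{e_i}}{X}$, and $\princ{e_i}$ is culpable in $X$.

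The only genuinely delicate step is this last case: one must walk along a configuration-witnessing enumeration of $C_0$ to the \emph{first} event $e_i$ outside $X$, so that all its predecessors lie in $X$; saturation of $\vdash$ then rules out the $\vdash$-alternative of Def.~\ref{def:conf}, forcing the $\Vdash$-alternative, and the side condition of the $\Vdash$-branch of Def.~\ref{def:duties} --- that $X$ $\vdash$-enables no event of $C_0 \setminus X$ --- is precisely the running case hypothesis. The remaining ingredients (saturation of $\oksym$, the reduction to the dichotomy, the quantifier bookkeeping) are routine. As a sanity check, the argument reproduces Ex.~\ref{ex:toys:culpable}: along the enumeration $\atom{c}, \atom{b}, \atom{a}$ of the agreement $\Ev$, at state $\emptyset$ the first missing event is $\atom{c}$, so $\pmv C$ is flagged culpable, as claimed.
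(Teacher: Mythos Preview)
The paper states Theorem~\ref{th:agreement} without proof, so there is no reference argument to compare against. Your proof is correct. The reduction to the dichotomy ``$C_0 \subseteq X$'' versus ``$C_0 \setminus X \neq \emptyset$'' is sound (saturation of $\oksym$ handles the former), and the subsequent split on whether $X$ $\vdash$-enables some event of $C_0 \setminus X$ cleanly discharges both branches of Def.~\ref{def:duties} with $C_0$ as the witnessing configuration. The step you single out as delicate --- walking the configuration-witnessing enumeration of $C_0$ to the first event $e_i$ outside $X$, so that $\setenum{e_0,\ldots,e_{i-1}} \subseteq X$ and saturation of $\vdash$ then rules out the $\vdash$-alternative of Def.~\ref{def:conf}, forcing $C_0 \Vdash e_i$ --- is exactly right, and the case hypothesis supplies the side condition $\nexists e' \in C_0 \setminus X : X \vdash e'$ of the $\Vdash$-branch of Def.~\ref{def:duties} verbatim.
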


\hidden{
\begin{example} \label{ex:toys:reach}
Let us consider the example \ref{ex:toys:contract}, and suppose that now Carl wants to play with his own toy car and will not lend it anymore. We have the following contract:
%
\[
  \setenum{\atom b} \vdash {\atom a}
  \qquad 
  \setenum{\atom c} \vdash {\atom b}
  \hspace{40pt}
  \ok{\pmv A}{\setenum{\atom b}}
  \qquad  \ok{\pmv B}{\setenum{\atom c}}
\]
We have that:
\[
\begin{array}{c}
\conf[\emptyset]{\cname} = \setenum{\emptyset}
\hspace{40pt}
\conf[X]{\cname} = \setenum{X, \Ev}
\quad \text{ if } |X| > 1
\\[8pt]
\conf[\setenum{\atom a}]{\cname} = \setenum{\setenum{\atom a}}
\hspace{20pt}
\conf[\setenum{\atom b}]{\cname} = \setenum{\setenum{\atom b},\setenum{\atom a,\atom b}}
\hspace{20pt}
\conf[\setenum{\atom c}]{\cname} = \setenum{\setenum{\atom c},\setenum{\atom b,\atom c}, \Ev}
\end{array}
\]
We have that none of the events  $\atom a$,$\atom b$,$\atom c$ are  $\emptyset$-reachable, because they all are related to the \emph{unpromised} event $\atom c$:''Carl will lend his toy car''.
\end{example}
}
\subsection{A logical characterisation of agreements} \label{sect:pcl-contracts}

The problem of deciding if an agreement exists on some contract $\cname$
is reduced below to the problem of proving formulae 
in the contract logic \pcl\!\!~\cite{BZ10lics}.
A comprehensive presentation of \pcl is beyond the scope of this paper,
so we give here a brief overview, and we refer the reader 
to~\cite{BZ10lics,PCLtr} for more details.

\pcl extends intuitionistic propositional logic IPC
with a new connective, called \emph{contractual implication} and denoted by~$\coimp$.
Differently from IPC,
a contract $\sf b \coimp a$ implies $\sf a$ not only when $\sf b$ is true, 
like IPC implication, but also in the case that a ``compatible'' 
contract, e.g.\ $\sf a \coimp b$, holds.
Also, \pcl is equipped with an indexed lax modality
$\says$, similarly to the one in~\cite{Garg08modal}.

The Hilbert-style axiomatisation of \pcl extend that of IPC    
with the following axioms:
\begin{align*}
& \top \coimp \top 
&& \phi \imp (A \says \phi) \\
& (\phi \coimp \phi) \imp \phi
&& (A \says A \says \phi) \imp A \says \phi \\
& (\phi' \imp \phi) \imp (\phi \coimp \psi) \imp (\psi \imp \psi') \imp (\phi' \coimp \psi')
&& (\phi \imp \psi) \imp (A \says \phi) \imp (A \says \psi) 
\end{align*}

The Gentzen-style proof system of \pcl extends that of IPC with the following rules 
(we refer to~\cite{BZ10lics} for the standard IPC rules, and for 
the rules for the $\says$ modality).
\[
  \irule
  {\Gamma \;\vdash\; q}
  {\Gamma \;\vdash\; p \coimp q}
  \qquad
  \irule
  {\Gamma,\ p \coimp q,\ a \;\vdash\; p \quad
   \Gamma,\ p \coimp q,\ q \;\vdash\; b}
  {\Gamma,\ p \coimp q \;\vdash\; a \coimp b}
  \qquad
  \irule
  {\Gamma,\ p \coimp q,\ r \;\vdash\; p \quad
  \Gamma,\ p \coimp q,\ q \;\vdash\; r}
  {\Gamma,\ p \coimp q \;\vdash\; r}
\]
Notice the resemblance between the last rule and the rule ($\imp$L) of IPC:
the only difference is that here we allow the conclusion $r$ to
be used as hypothesis in the leftmost premise.
This feature allows $\coimp$ to resolve circular assume/guarantee rules,
e.g.\ to deduce $a$ and $b$ from the formula
$a \coimp b \;\land\; b \coimp a$.

The proof system of \pcl enjoys cut elimination and the subformula property.
The decidability of the entailment relation $\vdash_{\pcl}$
is a direct consequence of these facts (see~\cite{BZ10lics} for details).

In Def.~\ref{def:contracts-to-pcl}
we show a translation from contracts to \pcl formulae.
In particular, our mapping is a bijection into the fragment of \pcl
(called \pclminus\!) which comprises 
atoms, conjunctions, says, and non-nested (standard/contractual) implications.

\begin{definition} \label{def:contracts-to-pcl} 
The mapping $[ \cdot ]$ from contracts into \pclminus formulae is defined as follows:
\[
\begin{array}{l}
    [(D_i \circ \atom{a}_i)_i ] = \bigwedge_{i} [D_i \circ \atom{a}_i] 
    \\[5pt]
   \left[ \setcomp{{\atom d}_i}{i \in {\mathcal I}} \circ  \atom a \right]
   =
     \princ{\atom a} \says ( \bigwedge_{i \in \mathcal{I}} \; \princ{\atom d_i} \says \atom{d}_i )  [\circ] \; \atom a 
   \end{array}
   \hspace{20pt} \mathit{where }\, 
    [\circ] = \begin{cases} 
      \imp & \text{if $\circ = \;\vdash$} \\
      \coimp & \text{if $\circ = \;\Vdash$}
   \end{cases}
\]
\end{definition}

\newcommand{\thespclreach}{
For all contracts $\cname$, an events $e$ is reachable in $\cname$
iff
$[\cname] \vdash_{\pcl} \princ{e} \says e$.
}
\begin{theorem}\label{th:es-pcl}
\thespclreach
\end{theorem}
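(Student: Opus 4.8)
The plan is to prove the two implications of the biconditional separately, establishing a tight correspondence between configurations of $\cname$ and \pcl proofs of the translated formula $[\cname]$.

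\textbf{From reachability to provability.} Suppose $e$ is reachable, so by Def.~\ref{def:conf} there is a configuration $C = \setenum{e_0,\ldots,e_n}$ with $e \in C$, together with a witnessing ordering of $C$ in which each $e_i$ is either $\vdash$-enabled by $\setenum{e_0,\ldots,e_{i-1}}$ or $\Vdash$-enabled by the whole of $C$. First I would work in the \pcl sequent $[\cname] \vdash_{\pcl} \bigwedge_{e'\in C} \princ{e'}\says e'$, proving all atoms of the configuration simultaneously --- this is essential, because the circular $\Vdash$-enablings can only be discharged when the conclusion being proven already includes the atoms promised ``on credit''. The proof proceeds by induction on the witnessing ordering. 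For a $\vdash$-enabled $e_i$, the conjunct $\princ{e_i}\says(\bigwedge \princ{d_j}\says d_j) \imp e_i$ in $[\cname]$ can be eliminated using the already-derived atoms for $\setenum{e_0,\ldots,e_{i-1}}$ via the $\says$-monotonicity axioms and ordinary ($\imp$L). For the $\Vdash$-enabled events, I would use the third Gentzen rule of \pcl (the ``circular'' variant of ($\imp$L)): it lets me assume the entire conjunction $\bigwedge_{e'\in C}\princ{e'}\says e'$ --- i.e.\ exactly what I am trying to prove --- as hypothesis in the premise that must establish the antecedent $\bigwedge \princ{d_j}\says d_j$ of the contractual implication. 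Because saturation of $\Vdash$ lets me take the enabling set to be $\subseteq C$, this antecedent follows from the assumed conjunction. Some care with the $\says$ modality (using $\phi \imp A\says\phi$ and idempotency $A\says A\says\phi \imp A\says\phi$) is needed to match the indexed modalities, but these are the routine parts.

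\textbf{From provability to reachability.} For the converse I would exploit cut elimination and the subformula property of \pcl, as in~\cite{BZ10lics}. Since $[\cname]$ lies in the \pclminus fragment and the goal $\princ{e}\says e$ is an atom-under-$\says$, a cut-free proof can only use the conjuncts of $[\cname]$, broken down by $\land$L, and then the left rules for $\imp$, $\coimp$ and $\says$. I would define, from such a cut-free derivation, the set $C$ of all atoms $a$ such that $\princ{a}\says a$ appears on the right of some sequent in the proof tree (or is derived along the way), and argue that $C$ is a configuration containing $e$. The ordering witnessing that $C\in\conf[\cname]$ is read off from the order in which atoms get introduced in the proof: an atom introduced by eliminating a $\vdash$-conjunct ($[\circ]=\imp$) corresponds to a $\vdash$-enabling by atoms already present; an atom introduced by eliminating a $\Vdash$-conjunct ($[\circ]=\coimp$) via the circular rule corresponds to a $\Vdash$-enabling, and here the hypothesis $r$ appearing in the left premise of that rule is precisely the mechanism that justifies ``$C \Vdash e_i$'' with the whole $C$ on the left. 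I expect a formal treatment needs an auxiliary invariant tracking which atoms have been ``taken on credit'' but not yet justified, exactly the bookkeeping alluded to after Lemma~\ref{lem:conf-plus-conf}.

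\textbf{Main obstacle.} The hard part will be the provability-to-reachability direction: extracting a well-founded witnessing order for the configuration from a cut-free \pcl proof whose circular rule applications may be deeply interleaved. One must show that, although $\coimp$-eliminations introduce atoms ``on credit'', every such credit is eventually discharged within the same configuration $C$, so that the resulting dependency structure admits a linearisation of the form required by Def.~\ref{def:conf}. I would isolate this as a lemma: in any cut-free \pcl proof of $[\cname]\vdash\Phi$ with $\Phi$ a conjunction of $\says$-atoms, the multiset of atoms proven can be linearly ordered so that each is $\vdash$-enabled by its predecessors or $\Vdash$-enabled by the whole set --- proved by induction on proof size, with the circular rule handled by observing that its left premise's extra hypothesis $r$ can be assumed to be (a conjunction of) atoms already destined for $C$. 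Given this lemma, together with Lemma~\ref{lem:conf-plus-conf} to combine the configurations arising from distinct branches of a $\land$L, both directions close.
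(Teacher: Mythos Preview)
Your proposal is sound in spirit and correctly isolates the crux --- the need to track atoms ``taken on credit'' when unwinding applications of the circular $\coimp$-left rule --- but the paper organises the argument rather differently, and the difference is instructive.

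The paper does \emph{not} attempt to extract a single configuration $C$ directly from a cut-free proof of $[\cname]\vdash_{\pcl}\princ{e}\says e$. Instead it first \emph{generalises} the notion of configuration to an $X$-configuration (Def.~\ref{def:conf} with an extra clause allowing any event in the parameter set $X$ to be taken without justification), and simultaneously introduces an inductively defined set $\mathcal{R}(X)$ whose three rules mirror, respectively, ($\imp$L), the circular ($\coimp$L) rule, and the axiom rule. The theorem is then factored as: (i) $e$ is $X$-reachable iff $e\in\mathcal{R}(X)$; (ii) $\mkset{\varphi}\subseteq\mathcal{R}(\mkset{\Phi})$ iff $[\cname],\Phi\vdash_{\pcl}\varphi$. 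Both (i) and (ii) are proved by straightforward induction on derivation depth, and the original statement is the case $X=\Phi=\emptyset$.

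The payoff of the paper's decomposition is that the induction hypothesis is already parametrised by the extra atomic hypotheses: when the circular rule fires, its left premise $\Gamma,\,p\coimp q,\,r\vdash p$ carries the additional hypothesis $r$, and this simply becomes an enlarged credit set $X\cup\{e\}$ in $\mathcal{R}$. Your direct approach --- fixing $C$ up front and arguing that every credit is ``destined for $C$'' --- runs into exactly the difficulty you flag: to push the induction through you would be forced to strengthen the statement to allow arbitrary extra atomic hypotheses, at which point you have essentially reinvented $X$-configurations. Your forward direction (building a \pcl proof from a witnessing order by first discharging all $\Vdash$-enabled events via nested circular-rule applications, then the $\vdash$-enabled ones in order) is correct and close to what the paper's rule (ii) for $\mathcal{R}$ encodes; the paper's formulation just makes the recursion structure explicit rather than leaving it implicit in the shape of the constructed proof tree.
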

\begin{proof}({\it Sketch})
We extend the definition of configuration, 
by allowing events to be picked from a set $X$, in the absence of their premises.
We say that $C \subseteq \Ev$ is an \emph{$X$-configuration} of $\cname$ iff  $X \subseteq C$
and 
\[
  \exists e_0, \ldots, e_n \in C. \  
  \setenum{e_0, \ldots, e_n} = C
  \;\land\; 
  \forall i \leq n.\; 
  \big(
  e_i \in X
  \;\lor\;
  \setenum{e_0, \ldots, e_{i-1}} \vdash e_i  
  \;\lor\;  
  C \Vdash e_i 
  \big)
\]
This allows, given an $X$-configuration, to add/remove any event
and obtain an $Y$-configuration, possibly with $Y \neq X$.
We shall say that the events in $X$ have been taken ``on credit'',
to remark the fact that they may have been performed in the absence of 
a causal justification.
Notice that Def.~\ref{def:conf} is the special case of the above when $X = \emptyset$.
An event $e$ is $X$-reachable if it belongs to some \mbox{$X$-configuration}.
For all $X$, we define the set $\mathcal{R}(X)$ by the following 
inference rules:
\[
\begin{array}{c}
  \irule
  {D \vdash e \hspace{12px} D \subseteq {\mathcal{R}}(X)}
  {e \in \mathcal{R}(X)} 
\hspace{20pt}
  \irule
  {D \Vdash e \hspace{12px} D \subseteq {\mathcal{R}}(X \cup \setenum{e})}
  {e \in \mathcal{R}(X)} 
\hspace{20pt}
  \irule
  {e \in X}
  {e \in \mathcal{R}(X)} 
\end{array}
\]
The set $\mathcal{R}(X)$ is used as a bridge in proving that
$e$ is $X$-reachable iff $[\cname],\, X \vdash_{\pcl} e$.
We prove first that $\mathcal{R}(X)$ contains exactly the $X$-reachable events,
and then we prove that $[\cname],\, X \vdash_{\pcl} e$ iff $e \in \mathcal{R}(X)$. 
The actual inductive statement is a bit stronger.
For all conjunction of atoms $\varphi$ and for all sets of conjunctions of atoms $\Phi$,
we denote with $\mkset{\varphi}$ and $\mkset{\Phi}$ the sets of atoms
occurring in $\varphi$ and in $\Phi$, respectively.
Then, we prove that for all $\varphi$ and for all $\Phi$:
\(
  \mkset{\varphi} \subseteq \mathcal{R}(\mkset{\Phi})
  \iff
  [\cname], \Phi \, \vdash_{\pcl} \varphi
\).
The $(\Leftarrow)$ direction is proved by induction on the depth of the derivation of 
$[\cname], \Phi \, \vdash_{\pcl} \varphi$.
For the $(\Rightarrow)$ direction, we let $e \in \mkset{\varphi}$, 
and then we proceed by induction on the depth of the derivation of $e \in \mathcal{R}(\mkset{\Phi})$.
\end{proof}

The following theorem 
reduces the problem of deciding agreements to
provability of \pcl formulae.
Concretely, one can use the decision procedure of \pclminus to
compute the set $C$ of reachable events. 
Then, an agreement exists iff 
each principal {\pmv A} has some goals contained in $C$.

\begin{theorem} \label{th:agreement-iff-reachable}
A contract $\cname$ admits an agreement iff:
\[
  \forall {\pmv A} \in \Princ.\;\;
  \exists G \subseteq \Ev.\;\;
  \big(
  \ok{\pmv A}{G} \;\land\;  
  \forall e \in G : [\cname] \vdash_{\pcl} \princ{e} \says e
  \big)
\]
\end{theorem}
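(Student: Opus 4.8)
The plan is to combine the two earlier structural results — Lemma~\ref{lem:conf:reachable-maximal} (the set of all reachable events is the greatest configuration) and Theorem~\ref{th:es-pcl} (reachability of $e$ coincides with provability of $\princ{e}\says e$) — to rewrite Definition~\ref{def:agreement} in the stated logical form. Throughout, let $C^* = \bigcup\setcomp{e\in\Ev}{e\text{ is reachable}}$, which by Lemma~\ref{lem:conf:reachable-maximal} is a configuration and contains every other configuration.

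For the $(\Rightarrow)$ direction, suppose an agreement exists, i.e.\ there is $C\in\conf[\cname]$ with $\ok{\pmv A}{C}$ for every $\pmv A\in\Princ$. Fix $\pmv A$ and take $G = C$. Since $C\subseteq C^*$, every $e\in G$ is reachable, so by Theorem~\ref{th:es-pcl} we get $[\cname]\vdash_{\pcl}\princ{e}\says e$; and $\ok{\pmv A}{G}$ holds by assumption. (One may instead take $G$ to be a minimal witness of $\ok{\pmv A}{C}$, but this is not needed.) This establishes the right-hand side.

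For the $(\Leftarrow)$ direction, assume that for every $\pmv A\in\Princ$ there is $G_{\pmv A}\subseteq\Ev$ with $\ok{\pmv A}{G_{\pmv A}}$ and $[\cname]\vdash_{\pcl}\princ{e}\says e$ for all $e\in G_{\pmv A}$. By Theorem~\ref{th:es-pcl}, each such $e$ is reachable, hence $G_{\pmv A}\subseteq C^*$. Since $C^*\in\conf[\cname]$ and $\ok{}{}$ is saturated (Def.~\ref{def:contracts}), from $\ok{\pmv A}{G_{\pmv A}}$ and $G_{\pmv A}\subseteq C^*$ we obtain $\ok{\pmv A}{C^*}$. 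As this holds for every $\pmv A$, the configuration $C^*$ witnesses the existence of an agreement by Def.~\ref{def:agreement}.

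The two directions are short once the ingredients are in place, so there is no genuinely hard step; the only point requiring care is that one must aggregate the per-participant witnesses $G_{\pmv A}$ into a \emph{single} configuration, which is exactly where Lemma~\ref{lem:conf:reachable-maximal} (ultimately Lemma~\ref{lem:conf-plus-conf}, closure of configurations under union) is essential — a set of individually reachable events need not itself be a configuration, as noted after Ex.~\ref{ex:toys:conf}, so one cannot simply take $\bigcup_{\pmv A}G_{\pmv A}$ without appealing to maximality of $C^*$.
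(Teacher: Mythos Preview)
Your proof is correct and follows essentially the same approach as the paper: both directions hinge on Theorem~\ref{th:es-pcl} to translate between reachability and provability, and the backward direction aggregates the per-participant witnesses into a single configuration. The only cosmetic difference is that you invoke the maximal configuration $C^*$ of Lemma~\ref{lem:conf:reachable-maximal} directly, whereas the paper builds the witness as $\bigcup_i C_i$ via Lemma~\ref{lem:all-in-conf} and Lemma~\ref{lem:conf-plus-conf} --- but since Lemma~\ref{lem:conf:reachable-maximal} is itself derived from those two lemmas, this is the same argument packaged slightly differently.
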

\begin{proof}
($\Rightarrow$)
Let $C$ be an agreement on $\cname$, and let $\Princ = \setenum{\pmv A_i}_i$. 
By Def.~\ref{def:agreement}, $\ok{\pmv A_i}{C}$ for all $i$.
By definition of $\oksym$, 
there exist $G_i \subseteq C$ such that $\ok{\pmv A_i}{G_i}$.
Since $G_i \subseteq C \in \conf[\cname]$, then $G_i$ is reachable.
Therefore, by Theorem~\ref{th:es-pcl}, 
$[\cname] \vdash_{\pcl} \princ{e} \says e$, for all $e\in G_i$.

($\Leftarrow$)
Let $\Princ = \setenum{\pmv A_i}_i$, and let $\setenum{G_i}_i$ be such that
$\ok{\pmv A_i}{G_i}$ and $[\cname] \vdash_{\pcl} \princ{e} \says e$ 
for all $i$ and for all $e \in G_i$.
By Theorem~\ref{th:es-pcl}, each $G_i$ is reachable.
By Lemma \ref{lem:all-in-conf}, for all $i$ 
there exists $C_i \in \conf[\cname]$  such that $C_i \supseteq G_i$.
By Lemma \ref{lem:conf-plus-conf}, $C = \bigcup_i C_i \in \conf[\cname]$ 
is an agreement on $\cname$. 
\end{proof}

Finally, note that also $\duties{\pmv A}{X}$ can be computed by 
exploiting the correspondence with \pcl\!.
More precisely, we use $\vdash_{\pcl}$ to compute the set of 
all reachable events, so obtaining the maximal configuration 
(Lemma~\ref{lem:conf:reachable-maximal}), and then to compute $D \vdash e$ 
as prescribed by Def.~\ref{def:duties}.

\section{Related work} \label{sec:related-work}

Contracts have been investigated using a variety of models, e.g.\
c-semirings~\cite{Buscemi07transactional,Buscemi07ccpi,Ferrari06logic},
behavioural types~\cite{Bravetti07sc,Carpineti06basic,Castagna09contracts},
logics~\cite{Artikis09jlap,PrisacariuS07formal},
\emph{etc}.
All these models do not explicitly deal with the circularity issue,
which instead is the focus of this paper.

Circularity is dealt with at a logical (proof-theoretic) level 
in~\cite{BZ10lics};
the relation between reachability in our model and provability 
in the logic of~\cite{BZ10lics} is stated by Theorem~\ref{th:es-pcl}.
Compared to~\cite{BZ10lics}, our model features a finer notion of duties:
while~\cite{BZ10lics} focusses on reachable events, Def.~\ref{def:duties}
singles out which events must be performed in a given state,
by interpreting $D \vdash e$ as 
``I will do $e$ \emph{after} $D$ has been done''.

In~\cite{Hvitved11jlap} a trace-based model for contracts is defined. 
Similarly to ours, a way is devised for blaming misconducts,
also taking into account time contraints. 
However,~\cite{Hvitved11jlap} is not concerned in how to
reach agreements, so 
the modeling of mutual obligations (circularity) is neglected. 
It seems interesting to extend our model with temporal deadlines,
which would allow for a tighter notion of agreement,
and, more in general, with soft constraints, 
which could be used to model QoS requirements.

\newcommand{\response}{\,\bullet\!\!\!\to}

In~\cite{Hildebrandt10places} a generalization of prime event
structures is proposed where a \emph{response} relation 
(denoted with~$\response$) is used to characterize the accepting traces
as those where, for each $a \,\response b$, if $a$ is present in the
trace, then $b$ eventually occurs after $a$. 
The response relation bears some resemblance with our $\Vdash$ relation,
but there are some notable differences.  
First, having $a\Vdash b$ 
does not necessarily imply that a configuration containing
$a$ must contain also $b$ (another enabling could have been used), 
whereas $a \, \response b$ stipulates
that once one has $a$ in an accepting configuration, then also $b$
must be present.  
Indeed, an enabling $a \Vdash b$ can be neglected, 
whereas $a \, \response b$ must be used. 
Also, augmenting the number of $\Vdash$-enablings
increases the number of configurations,
while adding more response relations reduces
the number of accepting configurations of the event structure.
Finally,~\cite{Hildebrandt10places} deals with conflicts,
while we have left this issue for future investigation.

\section{Conclusions}

We have proposed a basic model for contracts,
building upon a new kind of event structures 
which allow to cope with circular assume/guarantee
constraints.
Our event structures feature two enabling relations
(the standard enabling $\vdash$ of~\cite{Winskel86},
and the circular enabling $\Vdash$), but 
they lack a construct to model non-determinism,
and they only consider finite sets of events.
Some preliminary work on a generalisation of our event structures
with conflicts and infinite sets of events is reported in~\cite{BCPZ12ictcs}.
Further extensions to the basic model proposed here seem plausible:
for instance, more general notions of goals, agreements and duties.
Also, a formalisation of the intuitive notion of 
``participant protected by a contract'', which we used to motivate 
the circular enabling relation, seems most desirable.

Our contract model features an effective procedure for 
deciding when an agreement exists, 
and then for deciding the duties of participants at each execution step.
These procedures are obtained by the means of an encoding of contracts into
Propositional Contract Logic.
In particular, our encoding reduces the problem of detecting
whether an event is reachable, to that of proving a formula in \pcl\!.
The correctness of our encoding is stated in Theorem~\ref{th:es-pcl}.
An extension of such result is presented in~\cite{BCPZ12ictcs},
where configurations are characterised as provability of certain formulae in \pcl\!.


A concrete usage scenario of our contract model 
is a protocol for exchanging, agreeing upon, and executing contracts.
In the initial phase of the protocol, 
a special participant {\pmv T} acts as a contract broker,
which collects the contracts from all the participants.
Then, {\pmv T} looks for possible agreements on subsets of the contracts
at hand.
After an agreement on $\cname$ has been found, {\pmv T}
shares a session with the participants in $\cname$.
As long as the goals of some participant have not been
fulfilled, {\pmv T} notifies the duties to each culpable participant.
Variants of this protocol are possible which dispose {\pmv T} from
some of his tasks.
Notice that reaching an agreement is an essential requirement
for the security of this protocol:
if an untrusted contract broker claims to have found an agreement when
there is none, then Theorem~\ref{th:agreement} no longer applies,
and a situation is possible where a participant has not reached her goals,
but no one is culpable.
Notably, participants can still protect themselved against untrusted brokers,
by always requiring in their contracts the suitable ($\vdash$ / $\Vdash$) preconditions.
This protocol can be formally described in  
the process calculus \coco\!~\cite{BTZ11ice}. 
This requires to specialise
the abstract contract model of \coco to the contracts presented
in this paper, and, accordingly, to make the observables in
$\fuse{}{}$/$\ask{}{}$ prefixes correspond to agreements/duties, respectively.
Static analyses on \coco, e.g.\ the one in~\cite{BTZ12coordination},
may then be used to detect whether a participant always respects the contracts she advertises.

\paragraph{Acknowledgments.}
This work has been partially supported by 
by Aut.\ Region of Sardinia under grants L.R.7/2007 CRP2-120 (Project TESLA) 
and CRP-17285 (Project TRICS).

\bibliographystyle{eptcs}
\bibliography{main}

\end{document}